%
\documentclass[runningheads]{llncs}
\usepackage{graphicx}
\usepackage{caption}
\usepackage{amsmath,amssymb,amsfonts}
\usepackage{algorithmic}
\usepackage{enumitem}
\usepackage{textcomp}
\usepackage{xcolor}
\def\BibTeX{{\rm B\kern-.05em{\sc i\kern-.025em b}\kern-.08em
    T\kern-.1667em\lower.7ex\hbox{E}\kern-.125emX}}
\usepackage{url} 
\usepackage{tikz} 
\usetikzlibrary{positioning,arrows,shapes.misc,shapes.gates.logic.US,calc}
\usetikzlibrary{positioning,shapes,arrows}
\usepackage{varwidth} 
\usepackage{hyperref} 
\usepackage{dcolumn}
\usepackage{booktabs}
\usepackage{tikz}
\newcolumntype{M}[1]{D{.}{.}{1.#1}}
\usepackage{todonotes}
\usepackage{color}
\usepackage{listings}
\usepackage{amsmath, amssymb}
\usepackage[boxed]{algorithm2e}
\SetKwProg{Fn}{def}{\string:}{}
\SetKwFunction{FRecurs}{RecursiveProbability}%
\SetKw{KwTo}{in}\SetKwFor{For}{for}{\string:}{}%
\SetKwIF{If}{ElseIf}{Else}{if}{:}{elif}{else:}{}%
\SetKwFor{While}{while}{:}{fintq}%
\SetAlgoNoEnd\SetAlgoNoLine%
\DontPrintSemicolon
\makeatletter
\newcommand{\removelatexerror}{\let\@latex@error\@gobble}
\makeatother
%


\begin{document}
\title{Stochastic Simulation Techniques for Inference and Sensitivity Analysis of Bayesian Attack Graphs}

\titlerunning{Stochastic Simulation for Bayesian Attack Graphs}
%
\author{Isaac Matthews\inst{1,2} \and
Sadegh Soudjani\inst{1} \and
Aad van Moorsel\inst{1}}
\authorrunning{I. Matthews et al.}
%
\institute{School of Computing, Newcastle University, United Kingdom \and
\email{I.J.Matthews2@newcastle.ac.uk}\\
}
\maketitle              
\begin{abstract}
A vulnerability scan combined with information about a computer network can be used to create an attack graph, a model of how the elements of a network could be used in an attack to reach specific states or goals in the network. These graphs can be understood probabilistically by turning them into Bayesian attack graphs, making it possible to quantitatively analyse the security of large networks. In the event of an attack, probabilities on the graph change depending on the evidence discovered (e.g., by an intrusion detection system or knowledge of a host's activity). Since such scenarios are difficult to solve through direct computation, we discuss and compare three stochastic simulation techniques for updating the probabilities dynamically based on the evidence and compare their speed and accuracy.  From our experiments we conclude that likelihood weighting is most efficient for most uses.  We also consider sensitivity analysis of BAGs, to identify the most critical nodes for protection of the network and solve the uncertainty problem in the assignment of priors to nodes.  Since sensitivity analysis can easily become computationally expensive, we present and demonstrate an efficient sensitivity analysis approach that exploits a quantitative relation with stochastic inference.   

\keywords{Bayesian attack graph  \and Vulnerability scan \and Stochastic simulation \and Evidence of attack \and Intrusion detection \and Network security \and Probabilistic model.}
\end{abstract}
%
%
%

%
%

\section{Introduction}
Attack graphs are models of how vulnerabilities can be exploited to attack a network. They are directed graphs that demonstrate how multiple vulnerabilities and system configurations can be leveraged during a single attack in order to reach states in the network that were previously inaccessible to the attacker. An example of such a state would be root privilege on a database that contains sensitive information.
Attack graphs can be generated by performing a vulnerability scan of a network using a tool like OpenVAS~\cite{openvas} or Nessus~\cite{nessus}, and then processing the results with an attack graph generator. There are many generators available (NetSPA \cite{Ingols2009}, TVA \cite{jajodia2009topological}, MulVAL \cite{Ou2005:MLN:1251398.1251406}) but for this paper we use MulVAL as it is open source and is used by the majority of the literature on attack graphs. 

Attack graphs can be combined with Bayesian networks to allow for a probabilistic analysis of the security of a network \cite{Aguessy2016,Huangfu2017,Munoz-Gonzalez2017,Ramaki2015,sembiring2015network,Dantu2004,Doynikova2017,homer2013aggregating}. This combination is known as  a Bayesian Attack Graph (BAG). These BAGs are generated by incorporating likelihood information for vulnerabilities into the original attack graph. One method of constructing BAGs is to acquire information on each vulnerability that is present in the graph from a vulnerability repository like the National Vulnerability Database.  The Common Vulnerability Scoring System (CVSS) \cite{firstorg} vector that is in the database contains information, e.g., on the attack complexity for exploiting a vulnerability and the availability of an exploit \cite{firstorg}, which can be used in various ways to estimate the likelihood that the vulnerability will be exploited \cite{Frigault2017,Cheng2017,Keramati2014}. 

BAGs are particularly promising as a dynamic risk assessment tool where an administrator models new security controls and their effects on a network. A network's most likely attack paths and most vulnerable hosts can be dynamically analysed, and this can be updated dependent on information from an intrusion detection system \cite{Poolsappasit2012}.


Well-defined (that is, acyclic) BAGs can be solved using computational techniques that are well-known from the theory of Bayesian Networks \cite{Nielsen2009bayesian}. In recent work systematic approaches have also been proposed for BAGs that have loops and cycles, e.g., \cite{matthews2020cyclic}. However, direct computational approaches become prohibitively slow if the number of nodes in the BAG is large, and can have large space requirements due to an increase in the size of the cliques in the graphs and their probability tables. Therefore, it becomes important to consider stochastic simulation (Monte-Carlo) techniques.   

In this paper we focus on performing inference and sensitivity analysis on BAGs using stochastic simulation. We do this for dynamic scenarios that do not lend themselves for exact computation, namely scenarios that include observed evidence in the BAGs. We discuss how any evidence or alterations to the network can be included in the BAG analysis. That is, we create a dynamic model of the security of the network that can be used to deduce an attacker's most likely next move and their route thus far, as well as quantitatively evaluate and compare the effectiveness of different security controls and changes to the network. 

While inference for BAGs using stochastic simulation has been performed by others to investigate potential uses \cite{Noel2010measuring,baiardi2013assessing}, there has to date not been a comparison of different techniques' performances on BAGs. In this paper we employ three stochastic simulations techniques, probabilistic logic sampling (PLS), likelihood weighting (LW), and backward simulation (BS). We evaluate the performance of these techniques in their speed and accuracy as well as how they perform with different quantities of evidence to be included in the graph and different sizes of graph.

The primary outcome of our work is a recommendation of the most efficient simulation technique to use for inference in attack graphs. The recommendation is to use likelihood weighting, which performs well for both low and high evidence scenarios. 
Moreover, we establish a quantitative relation between stochastic inference and sensitivity analysis of BAGs.   
We discuss how the methods for including evidence in the graphs can also be used to measure the graphs sensitivity to each vulnerability in the network, and develop a fast approach to calculate these sensitivities without requiring many simulations or any analysis of distributions.

The rest of this paper is organised as follows. Section~\ref{sec:motivation} describes the formalism for BAGs that is used throughout the paper and introduces the running example along with the motivation for the work. Section~\ref{sec:samplingtech} introduces and discusses the three sampling techniques that are implemented and their accuracy. Section~\ref{sec:comparison} then evaluates the performances of these techniques with regard to accuracy, amount of evidence and size of graph. Section~\ref{sec:sens} introduces our measure of sensitivity and its importance. Finally Section~\ref{sec:related} compares this work with the current literature available and Section~\ref{sec:conclusion} presents our conclusions.


\section{Bayesian Attack Graphs}
\label{sec:motivation}

We consider a small enterprise network as a standard example used in the literature \cite{Ou2011,homer2013aggregating,matthews2020cyclic}
to motivate the use of Bayesian attack graphs (BAGs) and demonstrate the sampling techniques discussed in this paper.

\subsection{Motivating Example}

\begin{figure}
    \centering
    \includegraphics[width=0.6\linewidth]{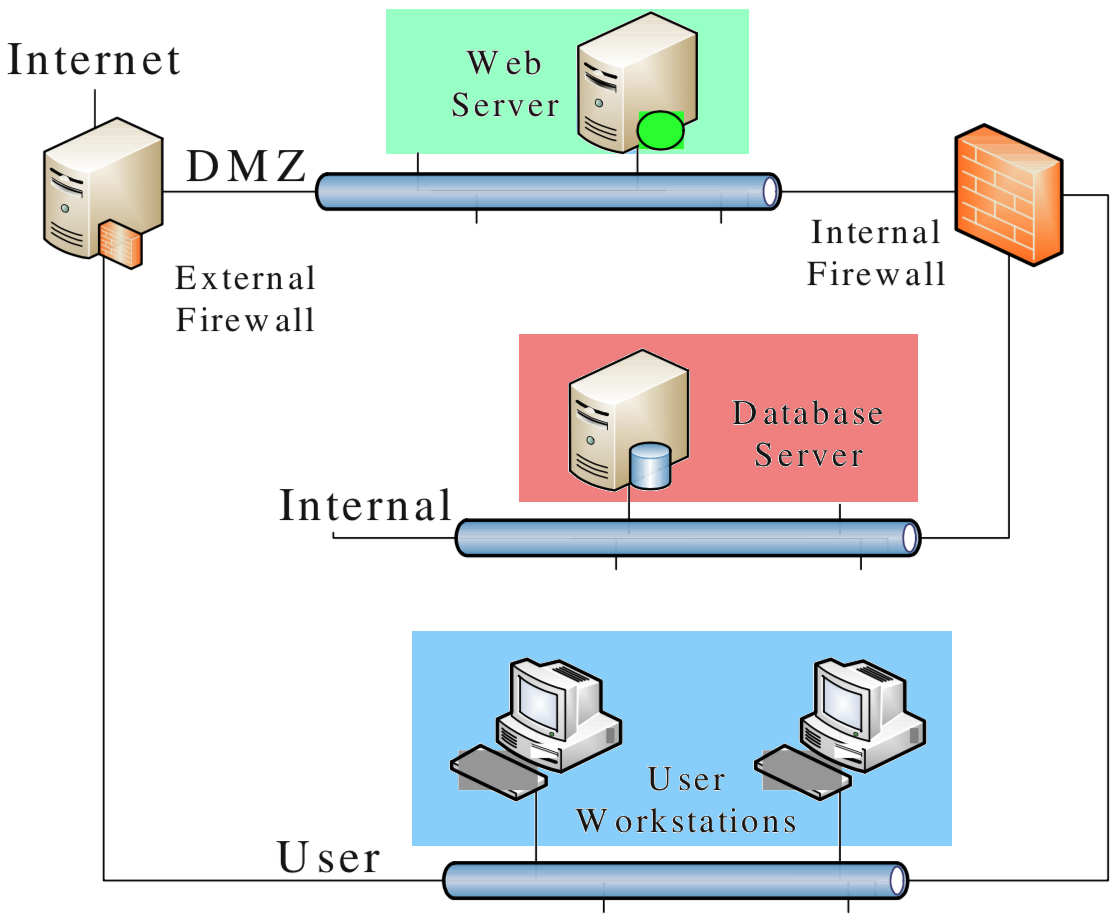}
    \caption{Example small enterprise network architecture.}
    \label{fig:netarch}
\end{figure}

The architecture of the small enterprise network can be seen in Figure~\ref{fig:netarch}. In this scenario, the network administrator wants to protect the database server on the internal network from being accessed and the data being exfiltrated. The internal network with the database server can be accessed via the internal firewall one of two ways, either from the web server or from one of the workstations (grouped together and treated as one host for simplicity). Both of these routes require access to the demilitarized zone subnet (DMZ), which can be accessed by the internet through the external firewall.

We can run a vulnerability scan on this network using tools like OpenVAS~\cite{openvas} or Nessus~\cite{nessus} to create an attack graph. In this scenario, we assume that the scan discovers a vulnerability on each of the hosts. On the database server there is a MySQL vulnerability, the web server has a vulnerability in Apache, and the workstations have an Internet Explorer vulnerability. A full description of the vulnerabilities and the resulting attack graph can be found in Appendix~\ref{app:eg}. With this vulnerability scan, we can generate an attack graph, for instance using the tools provided in MulVAL~\cite{Ou2005:MLN:1251398.1251406}. The attack graph represents how the vulnerabilities can be used in conjunction with one another to reach a high enough privilege on the database server to access the data.

\subsection{Bayesian Attack Graphs}

\begin{figure}
    \centering
	\includegraphics[trim=40 38 40 38, clip, width=\linewidth]{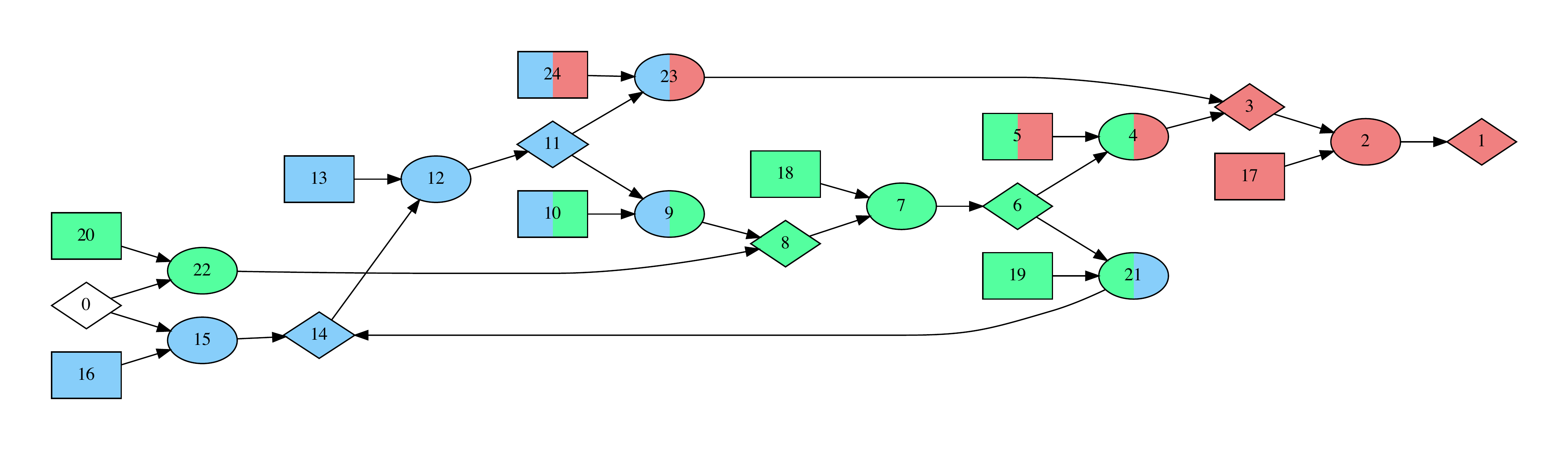}
	\caption{The BAG of the small enterprise network presented in Figure~\ref{fig:netarch}.}
	\label{fig:egbag}
\end{figure}

Figure~\ref{fig:egbag} shows the resulting BAG from running a scan on the network presented in Figure~\ref{fig:netarch}. For clarity, the nodes have been coloured to show which of the hosts in the architecture they correspond to. Nodes that have multiple colours are related to a transition between two hosts (from the colour on the left to the colour on the right). Node 0 in white colour is not related to any host on the network but represents an attack from the internet. Node 1 is deemed the goal node as it represents the primary state that the attacker must not reach, or else the data in the database becomes accessible to them. In general the goal node or nodes is the collection of nodes that allow the attacker to achieve something that the network administrator is trying to prevent. For a full description of this graph, along with all node labels, see Appendix \ref{app:eg}. 

We provide the formal definition of BAGs next.

\begin{definition}
\label{def:BAG2}
A Bayesian attack graph is defined as a directed graph $\mathcal G = (\mathcal V,\mathcal E),$ where $\mathcal V$ is the set of nodes and $\mathcal E\subset \mathcal V\times \mathcal V$ is the set of edges. Nodes in $\mathcal V$ are connected by edges from $\mathcal E$. We denote the edge connecting $v_i,v_j\in\mathcal V$ by $e_{ij}=(v_i,v_j)$. The set of nodes is comprised of three types of nodes, $\mathcal V = V_l \cup V_a \cup V_o$, representing LEAF, AND and OR nodes, as explained below.
\end{definition}
For any $v,v'\in\mathcal V$ and $(v,v')\in\mathcal E$, $v$ is called the parent node of $v'$ and $v'$ is called a child node of $v$. Similarly, we have the set of parent nodes $pa(v') := \{v\in\mathcal V\,|\, (v,v')\in\mathcal E\}$. 

The three types of nodes are as follows, first introduced in informal terms, then formally within the context of Bayesian attack graphs:
\begin{itemize}[noitemsep,topsep=4pt]
\item
$V_l$ is the set of LEAF nodes in the graph, nodes that have no parent. They represent network configurations, the existence of vulnerablities or running services, and different conditions in the network, for example network connection information in the form of HACLs (host access control lists).
\item
$V_a$ are the AND nodes, which have requisite conditions {\em all} of which must be satisfied in order to be accessed (an AND nodes parents have a conjunctive relationship). An example of an AND node would be the remote exploitation of a vulnerability, given that the vulnerability exists and the attacker has access to a host that is allowed access to the machine the vulnerability is located on. An AND node could also represent a movement between hosts if there are the fulfilled requirements of a configuration node for access between the machines and the attacker has access to one of them already.
\item
$V_o$ are the OR nodes, which have requisite conditions of which {\em at least one} must be satisfied in order to be accessed (an OR node's parents have a disjunctive relationship). These nodes represent micro-states within the network that encode information about an attacker's location and privilege in the system. For example, if a machine has several vulnerabilities that could be exploited to achieve privilege escalation on that machine then exploiting any of those vulnerabilities would grant the attacker access to the state of higher privilege on the host. The overall macro-state of the attacker, being their privilege on each host of the system and their access to and between each of the hosts, would be an enumeration of all the OR nodes that had been accessed.
\end{itemize}

As indicated in Figure~\ref{fig:egbag}, the LEAF nodes are drawn as rectangles on the graph, the AND nodes are ellipses and the OR nodes are diamonds.


Suppose a local probability function $p:\mathcal V\rightarrow [0,1]$ is given.
Any BAG $\mathcal G = (\mathcal V,\mathcal E)$ as in Definition~\ref{def:BAG2} with local probability function $p:\mathcal V\rightarrow [0,1]$ can be translated into a Bayesian Network, which is denoted by the tuple $\mathfrak B = (\mathcal V,\mathcal E,\mathcal T)$. Let us consider the set of nodes $\mathcal V = \{1,2,\ldots,n\}$ are associate a Boolean random variable $X_k$ to each node $k\in\mathcal V$. The set $\mathcal T$ is a collection of probability tables that are constructed as follows. For all $k\in V_l$,
\begin{equation}
\label{eq:BN_leaf}
\text{Prob}(X_k=1) = p(k) \quad \text{ and }\quad  \text{Prob}(X_k=0) = 1-p(k).
\end{equation}
For $k\in V_a$, let $pa(k)=\mathbf{1}$ indicate that all the parent variables of the node $k$ are equal to one. Then,
\begin{equation}
\label{eq:BN_and}
\begin{cases}
\text{Prob}(X_k=1|pa(k)=\mathbf{1}) = p(k),\\
\text{Prob}(X_k=1|pa(k)\neq\mathbf{1}) = 0,\\
\text{Prob}(X_k=0|pa(k)=\mathbf{1}) = 1-p(k),\\
\text{Prob}(X_k=0|pa(k)\neq\mathbf{1}) = 1.
\end{cases}
\end{equation}
For $k\in V_o$, let $pa(k)=\mathbf{0}$ indicate that all the parent variables of the node $k$ are equal to zero. Then,
\begin{equation}
\label{eq:BN_or}
\begin{cases}
\text{Prob}(X_k=1|pa(k)=\mathbf{0}) = 0,\\
\text{Prob}(X_k=1|pa(k)\neq\mathbf{0}) = p(k),
\end{cases}
\end{equation}
and $\text{Prob}(X_k=0|pa(k))$ is the complement of above probabilities.



\begin{remark}
Note that here we use the AND/OR formalism for BAGs \cite{Ou2006:SAA:1180405.1180446} having three types of nodes, but another common formalism is the plain BAG \cite{Swiler2001} that has only one type of nodes. We use AND/OR BAGs in order to move probabilities to the edge of the graph to allow deterministic analysis for all other nodes as described below. In our previous work \cite{matthews2020cyclic}, we have demonstrated the relation between the two formalisms and explained how to transform one form to the other and as such the current paper is relevant to both types of BAGs.
\end{remark}

\subsection{Problem Statement}
\label{problem}



\begin{problem}[Access Probabilities]
\label{prob:access}
Consider an attack graph $(\mathcal{V}, \mathcal{E})$ with local probability function $p:\mathcal V\rightarrow [0,1]$. Compute $\text{Prob}(X_k=1)$ for any $k\in \mathcal V$. This quantity is called the \emph{access probability} of the node $k$ and is simply denoted by $P(X_k)$. It will give the likelihood that an attacker will reach node $k$ in an attack and will depend on the local probability function $p$ and the structure of the attack graph.
\end{problem}

\begin{problem}[Inference]
\label{prob:inference}
Suppose some evidence of an attack is known in the form of $\mathcal{Z} = \textbf{z}$, where $\mathcal{Z} \subset \mathcal{V}$ is the set of random variables associated to the nodes for which we have the respective evidence values $\textbf{z}$. Compute the likelihood that the attacker gain access to node $k\in\mathcal V$ given such an evidence: $P(X_k|\mathcal{Z}=\textbf{z})$.
\end{problem}

\begin{problem}[Sensitivity Analysis]
\label{prob:sense}
The local probability function $p:\mathcal V\rightarrow [0,1]$ is often estimated based on prior knowledge or data on the network. 
Compute the sensitivity of access probabilities $P(X_k)$ and conditional probabilities $P(X_k |\mathcal{Z}=\textbf{z})$ to the values $p(v)$ for any $v\in\mathcal V$. If $p(v)$ has a distribution, compute an interval for these quantities with a confidence bound.
\end{problem}

We provide stochastic simulation techniques to answer Problems~\ref{prob:access} and \ref{prob:inference} in Section~\ref{sec:samplingtech}, and present a novel solution to Problem~\ref{prob:sense} in Section~\ref{sec:sens}.

\section{Sampling Techniques}
\label{sec:samplingtech}

\subsection{Graph Decomposition}

In order to simplify the process of simulation for attack graphs, we can move all stochastic behaviour to LEAF nodes and in doing so make the rest of the graph purely deterministic. This can be achieved by enlarging the attack graph and moving local probabilities of non-LEAF nodes onto a new LEAF node with the same local probability. This process can be seen in Figures~\ref{fig:precon}-\ref{fig:postcon}, where the node C will have the same behaviour in both figures. This is only required when non-LEAF nodes have local probabilities which is not the case for graphs we create but is found in the literature. Full demonstration of the equivalence is in our previous work \cite{matthews2020cyclic}.
\begin{figure}
\centering    
\begin{minipage}{.5\textwidth}
    \begin{tikzpicture}[node distance=1cm and 0cm]
    \node[draw, ellipse, minimum width=1.1cm, align=center] (A) {A};
    \node[draw, ellipse, minimum width=1.1cm, align=center, right= 0.75cm of A] (B) {B};
    \path (A) -- (B) coordinate[midway] (mid);
    \node[draw, diamond, aspect=2, minimum size=1cm, below = of mid, align=center] (C) {C};
    \path (A) edge[-latex] (C);
    \path (B) edge[-latex] (C);
    \node[below right=0.02cm of B]
        {
        \begin{tabular}{cccc}
        \toprule
        && \multicolumn{2}{c}{C} \\
        A & B & \multicolumn{1}{c}{F} & \multicolumn{1}{c}{T} \\
        \cmidrule(r){1-2}\cmidrule(l){3-4}
        F & F & 1 & 0 \\
        F & T & 1-$p(v)$ & $p(v)$ \\
        T & F & 1-$p(v)$ & $p(v)$ \\
        T & T & 1-$p(v)$ & $p(v)$ \\
        \bottomrule
        \end{tabular}
        };
    \end{tikzpicture}
    \caption{A small attack graph, with local probability in OR node C.}
    \label{fig:precon}
\end{minipage}%
\begin{minipage}{.5\textwidth}
    \begin{tikzpicture}[node distance=1cm and 0cm]
    \node[draw, ellipse, minimum width=1.1cm, align=center] (A) {A};
    \node[draw, ellipse, minimum width=1.1cm, align=center, right= 0.75cm of A] (B) {B};
    \path (A) -- (B) coordinate[midway] (mid);
    \node[draw, diamond, aspect=2, minimum size=1cm, below = of mid, align=center] (C1) {C$^\prime$};
    \node[draw, minimum size=0.8cm, align=center, right=0.75cm of C1] (v) {$v'$};
    \path (C1) -- (v) coordinate[midway] (mid1);
    \node[draw, ellipse, minimum width=1.1cm, align=center, below= 1cm of mid1] (C) {C};
    \path (A) edge[-latex] (C1);
    \path (B) edge[-latex] (C1);
    \path (v) edge[-latex] (C);
    \path (C1) edge[-latex] (C);
    \node[right=0.15cm of v]
        {
        \begin{tabular}{cc}
        \toprule
        \multicolumn{2}{c}{$v'$} \\
        \multicolumn{1}{c}{F} & \multicolumn{1}{c}{T} \\
        \cmidrule(l){1-2}
        1-$p(v)$ & $p(v)$ \\
        \bottomrule
        \end{tabular}
        };
    \node[below left=0.15cm of C1]
        {
        \begin{tabular}{cccc}
        \toprule
        && \multicolumn{2}{c}{C$^\prime$} \\
        A & B & \multicolumn{1}{c}{F} & \multicolumn{1}{c}{T} \\
        \cmidrule(r){1-2}\cmidrule(l){3-4}
        F & F & 1 & 0 \\
        F & T & 0 & 1 \\
        T & F & 0 & 1 \\
        T & T & 0 & 1 \\
        \bottomrule
        \end{tabular}
        };
    \node[below=0.15cm of C]
        {
        \begin{tabular}{cccc}
        \toprule
        && \multicolumn{2}{c}{C} \\
        C$'$ & $v'$ & \multicolumn{1}{c}{F} & \multicolumn{1}{c}{T} \\
        \cmidrule(r){1-2}\cmidrule(l){3-4}
        F & F & 1 & 0 \\
        F & T & 1 & 0 \\
        T & F & 1 & 0 \\
        T & T & 0 & 1 \\
        \bottomrule
        \end{tabular}
        };
    \end{tikzpicture}
    \caption{The equivalent graph of Figure~\ref{fig:precon} with probabilities only on leaves.}
    \label{fig:postcon}
\end{minipage}
\end{figure}

\subsection{Generating Samples}

Using this formalism of BAGs a single attack can be modelled as the array of LEAF nodes being allocated values corresponding to `achieving' something in an attack such that if the node is given the value 1 then the exploit has worked or a condition has been met, and if the value is 0 then an exploit or condition has failed. With these values the reach of the attack can be calculated, as the internal (non-LEAF nodes) in the graph are all deterministically dependent on the LEAF nodes. In this way, with a specific distribution of LEAF nodes a state on the graph is either accessible or inaccessible. With the example of Figure \ref{fig:egbag}, a single attack configuration would equate to all the rectangular LEAF nodes being set as $y$ or $n$, determining the states of the rest of the nodes in the graph. The attackers ability to reach an important state, like the ability to execute code on the database server at node 1, becomes either $y$ or $n$. 


We prepare the graph by assigning the LEAF nodes a series of prior distributions based upon factors like the ease of exploitation of a vulnerability. We then sample from these to create a single attack simulation with the LEAF nodes being 1 or 0 according to a random sample of their distribution, and all other nodes being assigned values deterministically from their tables.

\subsection{Probabilistic Logic Sampling}
For our attack graphs, probabilistic logic sampling (PLS) is performed by first sampling a configuration of LEAF nodes. A random number is generated between 0 and 1, if the number generated is less than the prior probability assigned to the LEAF node then the node is assigned a 1 (or $y$), if it is greater then the node is set at 0 ($n$). This is repeated for all LEAF nodes to create the configuration. When the configuration has been generated it can be used to prescribe states to the rest of the nodes in the graph. These states are then recorded as an array of 1s and 0s. This process is then repeated until $N$ configurations have been generated and evaluated. The recorded arrays can be used to estimate the probability distributions of the nodes in the graph, with $\frac{N(X=1)}{N}$ being the estimated probability that an attacker will gain access to a particular node:
\begin{equation}
\label{eq:probs}
    P(X) \approx \left(\frac{N(X=1)}{N},\frac{N(X=0)}{N}\right)
\end{equation}
The simplest way to include evidence with this technique is by discarding any samples that do not conform to the evidence provided. As such one is left with a subset of the original $N$ simulations and can calculate the new probabilities in a similar way to equation \eqref{eq:probs}.

In order to estimate the probability distribution of the $k^{th}$ variable with regard to the new evidence, $P(X_k|\mathcal{Z}=z)$, using $N$ samples with PLS we use algorithm \ref{alg:pls} modified from \cite{Nielsen2009bayesian}. Here $\mathcal{Z}$ is the variables or nodes that we have evidence for and $\textbf{z}$ is the evidence that has been provided for these nodes. This would be in the form of a list of nodes that we know have been accessed created by an intrusion detection system, or a list of nodes that we are modelling as not accessed if we are comparing different security controls and their affect on the network.

\begin{algorithm}
\removelatexerror
	\SetAlgoLined
	\KwIn{attack graph $(\mathcal{V}, \mathcal{E})$, local probability function $p:\mathcal V\rightarrow [0,1]$, number of simulations $N$}
	\KwOut{Conditional likelihood $P(X_k|\mathcal{Z}=\textbf{z})$ given evidence $\mathcal{Z}=\textbf{z}$}
	
	1. Let $\mathcal V = \{1,2,\ldots,n\}$ and $(X_1,...,X_n)$ be the associated Boolean random variables.
	
	2. Initialise $N(X_k = x_k) = 0$ for all $x_k\in\{0,1\}$ and all $k\in\mathcal V$.
	
    3. \For{$j=1$ to $N$}{
	a) \For{$i=1$ to $n$}{
	Sample a state $x_i$ for $X_i$ using $P(X_i|pa(X_i) = \pi)$, where $\pi$ is the configuration sampled for $pa(X_i)$
	}
	b) If $\textbf{x} = (x_1,...,x_n)$ is consistent with \textbf{z}, then $N(X_k = x_k) := N(X_k = x_k) + 1$, where $x_k$ is the sampled state for $X_k$
	}
	\Return{ $\text{Prob}(X_k = x_k|\textbf{z}) \approx \dfrac{N(X_k = x_k)}{N(X_k=0) + N(X_k=1)}$}
\caption{Performing PLS to approximate a distribution given some evidence.}
\label{alg:pls}
\end{algorithm}

\subsection{Likelihood Weighting}
Likelihood weighting (LW) is a method to deal with the problems of PLS for dealing with evidence, namely the inefficiency of generating samples that will be discarded if they conflict with evidence. Instead, for LW, only non-evidence variables are sampled from and as such no simulations are discarded. However this approach causes sampled variables to ignore evidence that is not present in their ancestors, and so an extra weighting has to be introduced. This weighting is equivalent to the probability a certain state will arise given the evidence provided.

Essentially we want to sample from the following distribution, 
\begin{equation}
\begin{aligned}
\label{eq:lwaim}
    P(\textbf{V},\textbf{z}) = &\prod_{X\in V\setminus\mathcal{Z}} P(X|pa(X)`,pa(X)`` = \textbf{z})  \\
    &\times \prod_{X\in \mathcal{Z}} P(X=e|pa(X)`,pa(X)`` = \textbf{z}) 
\end{aligned}
\end{equation}
where $pa(X)``$ are parent nodes that have evidence, and $pa(X)`$ do not. By fixing the evidence variables then taking the sample we instead are using
\begin{equation}
\label{eq:lwsample}
    Distribution = \prod_{X\in V\setminus\mathcal{Z}} P(X|pa(X)`,pa(X)`` = \textbf{z}) 
\end{equation}
So to rectify this we weigh each sample taken using
\begin{equation}
\label{eq:lwweight}
     w(\textbf{x},\textbf{z}) = \prod_{Z\in \mathcal{Z}} P(Z=z|pa(X)=\pi) 
\end{equation}
where $\pi$ is the configuration of the parents specified by \textbf{x} and \textbf{z}.
In order to estimate $P(X_k|\mathcal{Z}=z)$ using $N$ samples we use algorithm \ref{alg:lw} as defined in \cite{Nielsen2009bayesian}.

\begin{algorithm}
\removelatexerror
	\SetAlgoLined
	\KwIn{$P(X_k)$}
	\KwOut{$P(X_k|\mathcal{Z}=z)$}
	
	1. Let $(X_1,...,X_n)$ be all nodes present in the graph.
	
    2. \For{$j=1$ to $N$}{
	a) $w:=1$
	
	b) \For{$i=1$ to $n$}{
	- Let $\textbf{x`}$ be the configuration of $(X_1,...,X_{i-1})$ specified by \textbf{e} and previous samples
	
	- \If{$X_i \not\in \mathcal{Z}$}{
	Sample a state $x_i$ for $X_i$ using $P(X_i|pa(X_i) = \pi)$, where $pa(X_i) = \pi$ is consistent with $\textbf{x`}$
	}
	\Else{
	$w:=w\cdot P(X_i = z_i|pa(X_i) = \pi)$, where $pa(X_i) = \pi$ is consistent with $\textbf{x`}$
	}
	}
	
	c) $N(X_k = x_k) := N(X_k = x_k) + w$, where $x_k$ is the sampled state for $X_k$
	}
	
	\Return{ $P(X_k = x_k|\textbf{z}) \approx \frac{N(X_k = x_k)}{\sum_{x\in sp(X_k)} N(X_k=x)}$}
	\caption{Performing likelihood weighting to approximate a distribution given some evidence}
\label{alg:lw}
\end{algorithm}

This is an improvement on PLS as it removes the inefficiency of discarding evidence, instead requiring the calculation of a weight for each simulation. A large number of samples may still be required, however, if the evidence provided is unlikely and therefore the difference between equations \ref{eq:lwaim} and \ref{eq:lwsample} is large. This would mean the weighting would in general be very small and as such reaching an amount of error that is not too large may take some time.

\subsection{Backward Simulation}
The final technique is based on the Backward Simulation (BS) method devised by Fung and Del Favero \cite{Fung1994backward}. The primary difference between this and other techniques is that simulation runs originate at the known evidence and the simulation is run backwards. Once this process has terminated the remaining nodes are forward sampled in the standard way. The reason for this is to rectify the slow convergence caused by unlikely evidence.

The backward sampling procedure is performed by taking a sample from the distribution
\begin{equation}
P_s(Pa(X_i)) = \frac{P(X_i|pa^u(X_i)pa^*(X_i))}{Norm(i)},i\in N_b
\end{equation}
where $pa^u(X_i)$ are the uninstantiated parents of $X_i$ and $pa^*(X_i)$ are the instantiated parents. The normalizing constant is calculated as 
\begin{equation}
    Norm(i) = \sum_{y\in XP(pa^u(X_i)}P(X_i|y, X_{pa^*(X_i)})
\end{equation}
. Once all backwards sampling nodes have been sampled the forwards sweep samples all the remaining nodes as
\begin{equation}
    P_s(X_i)=P(X_i|pa(X_i)), i\in N_f
\end{equation}
Once a sample has been taken for each node, the weight for the simulation can be computed as the product of the normalisation constants used along with the likelihood of nodes that were set by backwards sampling but were not sampled from themselves
\begin{equation}
    Z(x)=\prod_{i\in N \setminus N_S}P(X_i|Pa(X_i))\prod_{j\in N_b} Norm(j)
\end{equation}
BS, as a form of likelihood weighting, is designed to cope better with very low-likelihood evidence. A large part of the computational cost of the algorithm comes from the calculation of the normalisation constants, which grows exponentially with the number of predecessor nodes. We would expect this technique to perform similarly to likelihood weighting for few evidence nodes but be an improvement when there are many nodes, as is demonstrated in the paper presenting the technique \cite{Fung1994backward}. However the structure of the graph is of great importance and as such it is difficult to know beforehand which of the techniques will perform better for the application of BAGs.

\subsection{Confidence bounds}
Since all these techniques are sampling from the same distribution once the corrective factors are applied, the standard error can be calculated in a similar way for each. As each trial is random and independent from the last, using the central limit theorem it can be shown that
\begin{equation}
    \sigma_{p(x)} = \sqrt{\frac{P\{x\}(1-P\{x\})}{N}}
\end{equation}

\begin{figure}[h!]
\centering
\includegraphics[width=0.7\textwidth]{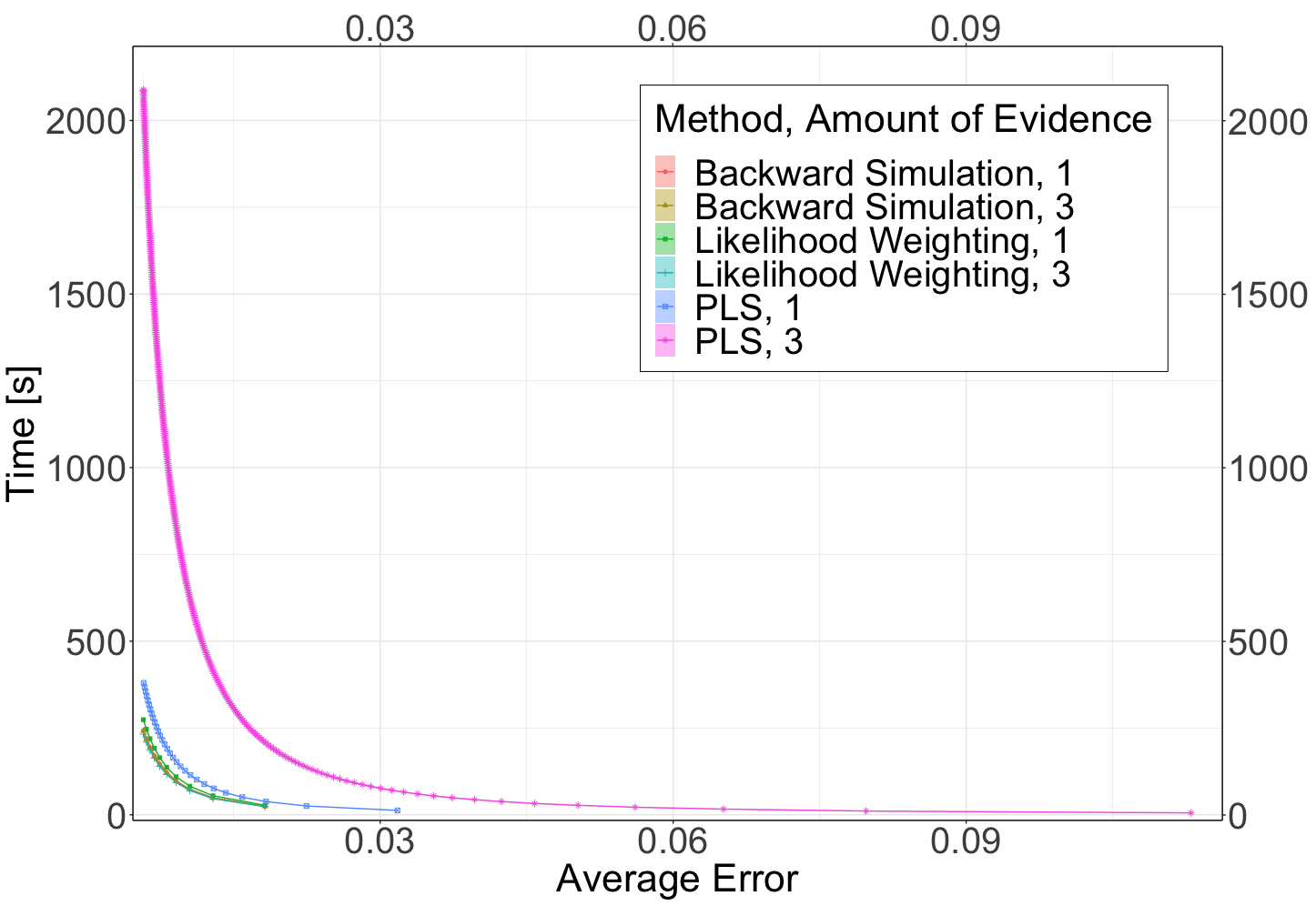}
\caption{Time against average node error for all techniques for one and three evidence nodes.} \label{fig_plstime}
\end{figure}

\section{Comparison}
\label{sec:comparison}
For the comparison of these techniques, each is first run on a 200 node attack graph from a small enterprise network with varying amounts of evidence. Figure \ref{fig_plstime} shows the increase in time (in wall clock seconds) required for improving the accuracy of results for situations when one and three evidence nodes have been included (the average time over thirty runs has been plotted; the error bars are too small to be drawn for this graph). As can be seen even with just one piece of evidence PLS performed poorly compared to the other methods, with three evidence nodes taking considerable amounts of time and runs with more than three evidence nodes timing out. The other two methods are run with five and ten evidence nodes provided, and the results for this can be seen in Figure \ref{fig_nonplstime}, again with the average result over thirty runs plotted. The minimum and maximum values are shown by the transparent ribbon. While these results are close, interestingly LW does outperform BS at higher quantities of evidence.

\begin{figure}[h!]
\centering
\includegraphics[width=0.8\textwidth]{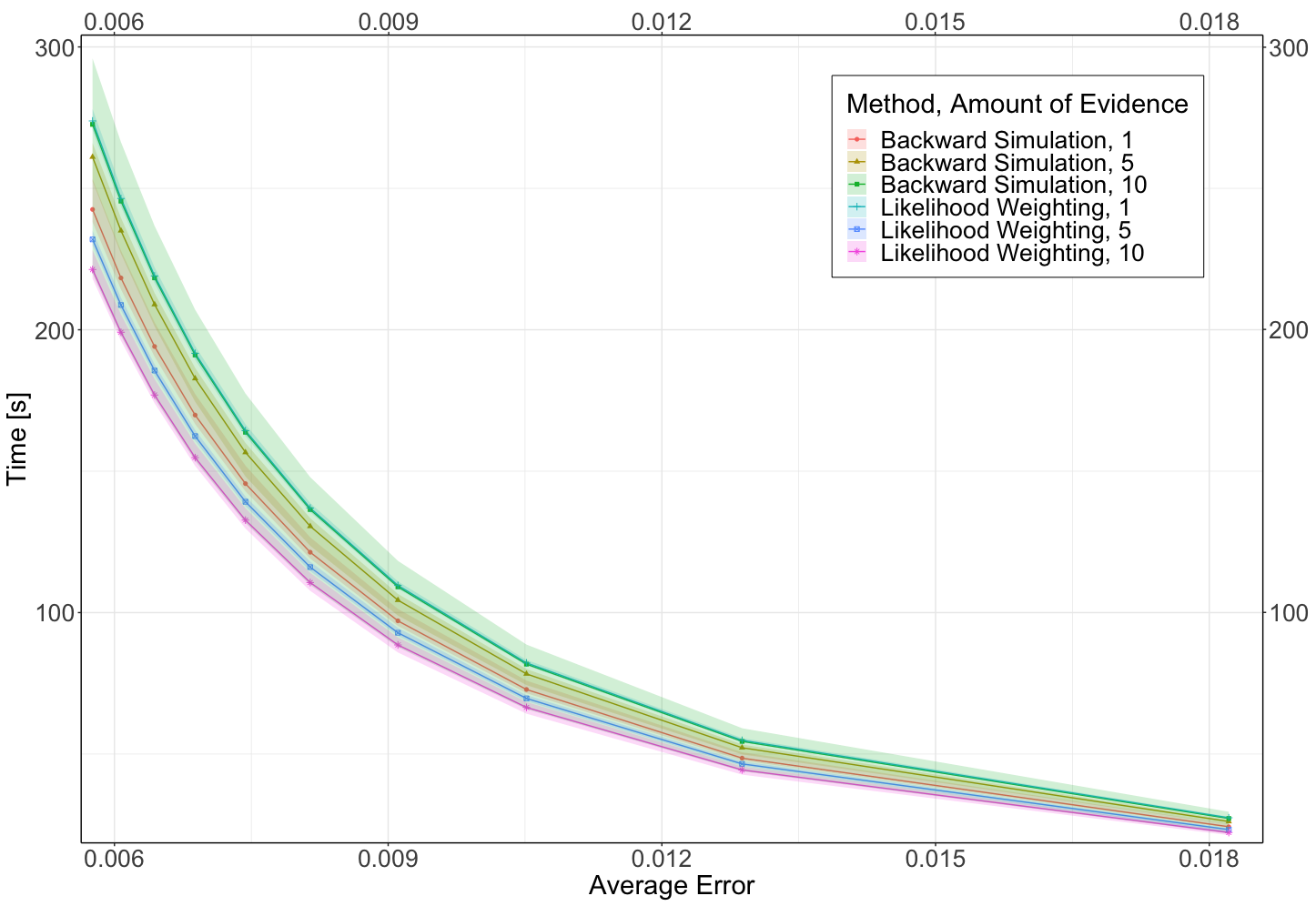}
\caption{Time against average node error for BS and LW for different numbers of evidence nodes.} \label{fig_nonplstime}
\end{figure}


Figure \ref{fig_conv} shows the convergence of each technique on a probability for one of the goals in the network, with the ribbon showing the error of the estimate. LW and BS converge equally quickly with three pieces of evidence but BS does converge faster when only one evidence is used.
\begin{figure}
\centering
\begin{minipage}{.48\textwidth}
\captionsetup{width=.9\linewidth}
  \includegraphics[width=0.99\textwidth]{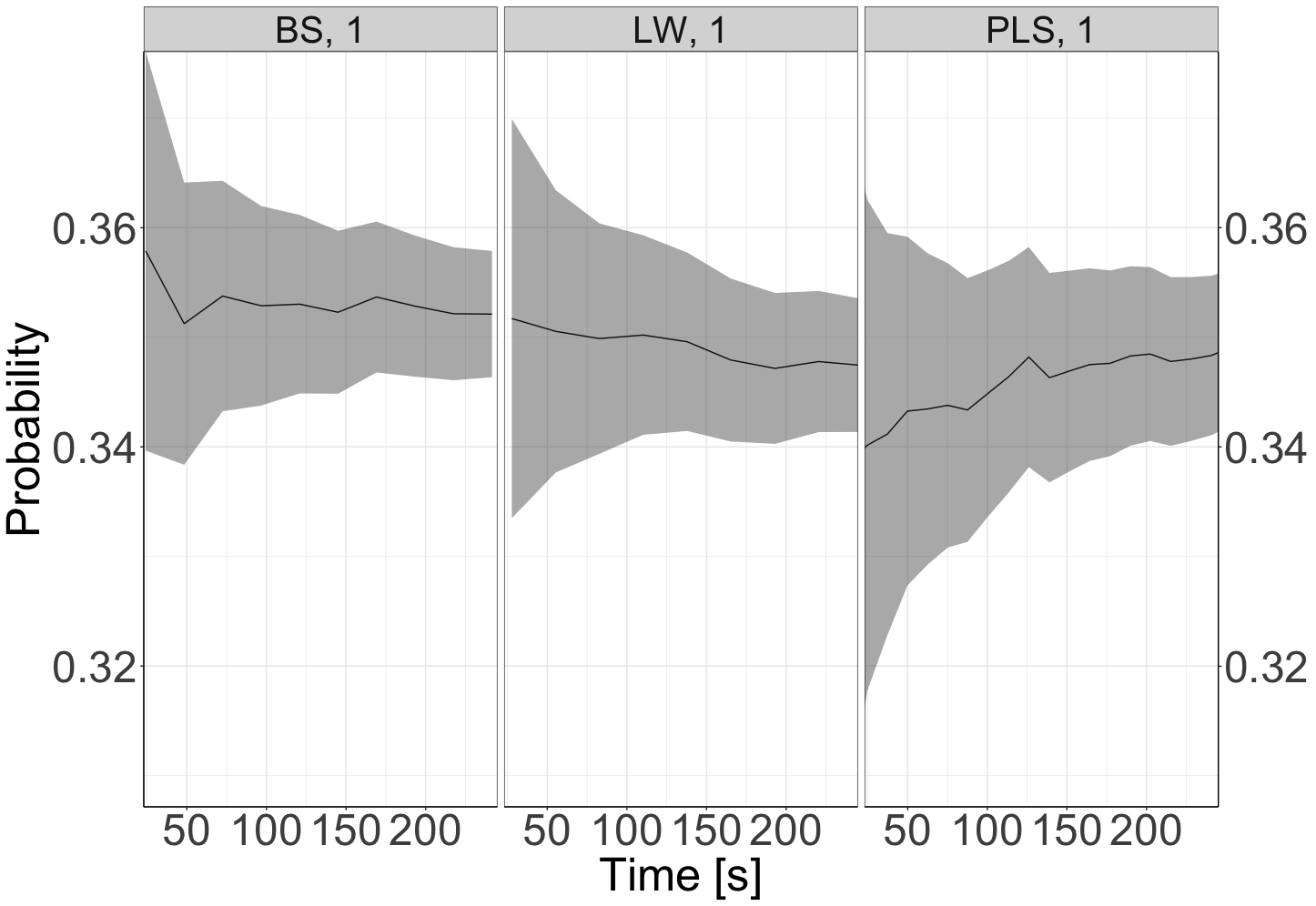}
\end{minipage}%
\begin{minipage}{.48\textwidth}
\captionsetup{width=.9\linewidth}
  \includegraphics[width=0.99\textwidth]{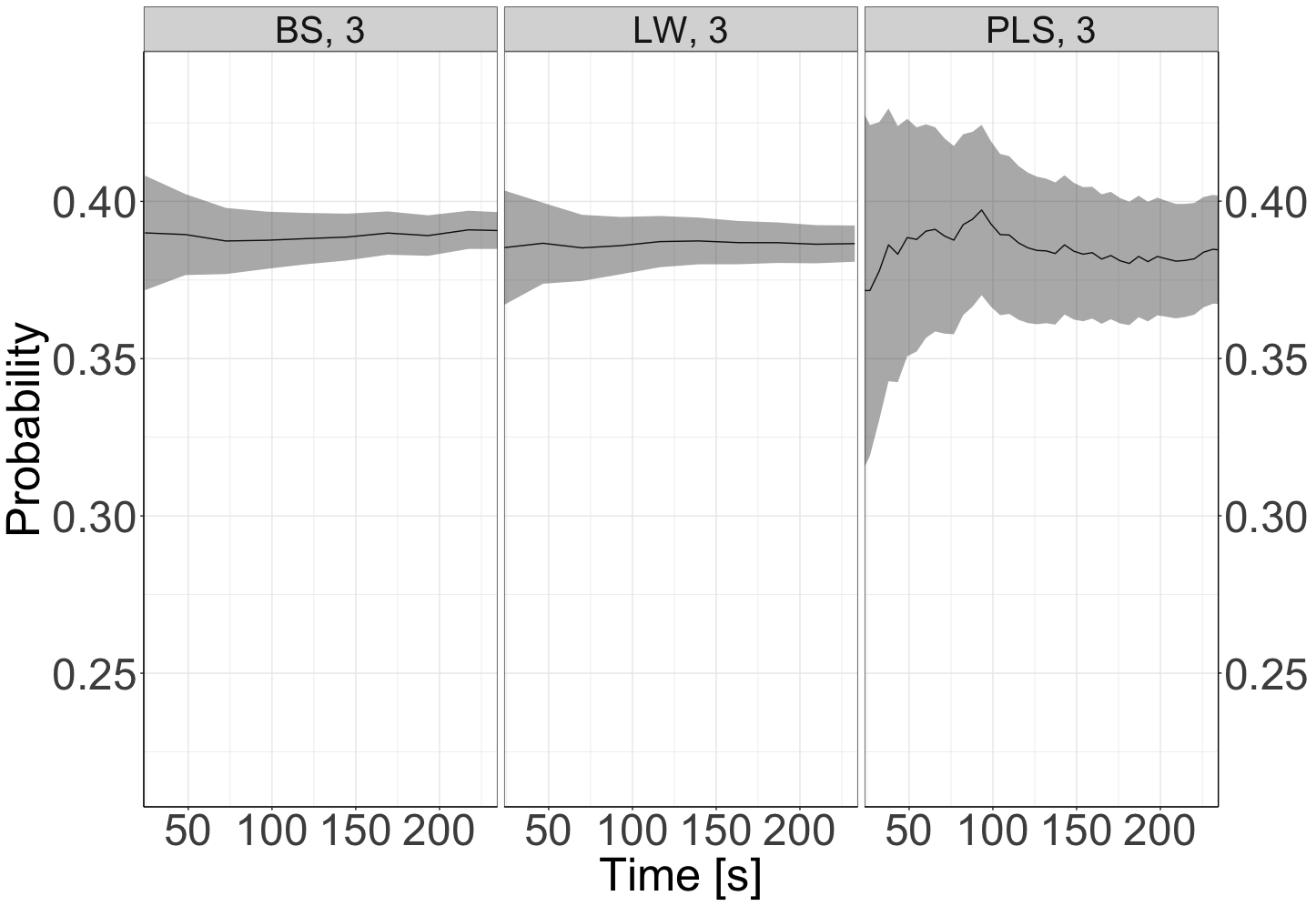}
\end{minipage}
\caption{Convergence of goal node probability for all three techniques with one evidence node \emph{(left)} and three evidence nodes \emph{(right)}}
  \label{fig_conv}
\end{figure}
The techniques are also run across graph sizes of 100 nodes to 1500, again with 30 runs per graph size, with 1, 5 and 10 evidence nodes used. The results of these runs are shown in Figure \ref{fig_size} with the points showing the average run time and the ribbon showing the maximum and minimum of the runs. PLS runs slightly worse than the other two techniques for one evidence node but performs very poorly for the other evidence levels so is omitted from the graphs for clarity. The stopping criteria for a run is an error of $\pm0.02$ per node. 

BS performs best with one evidence node, and gives similar results for both 5 and 10 pieces of evidence. LW performs about as well at all evidence levels for graphs below 1000 nodes, but performs best with 10 evidence nodes for graphs larger than 1000.

\begin{figure}[h!]
  \centering
  \includegraphics[width=0.7\linewidth]{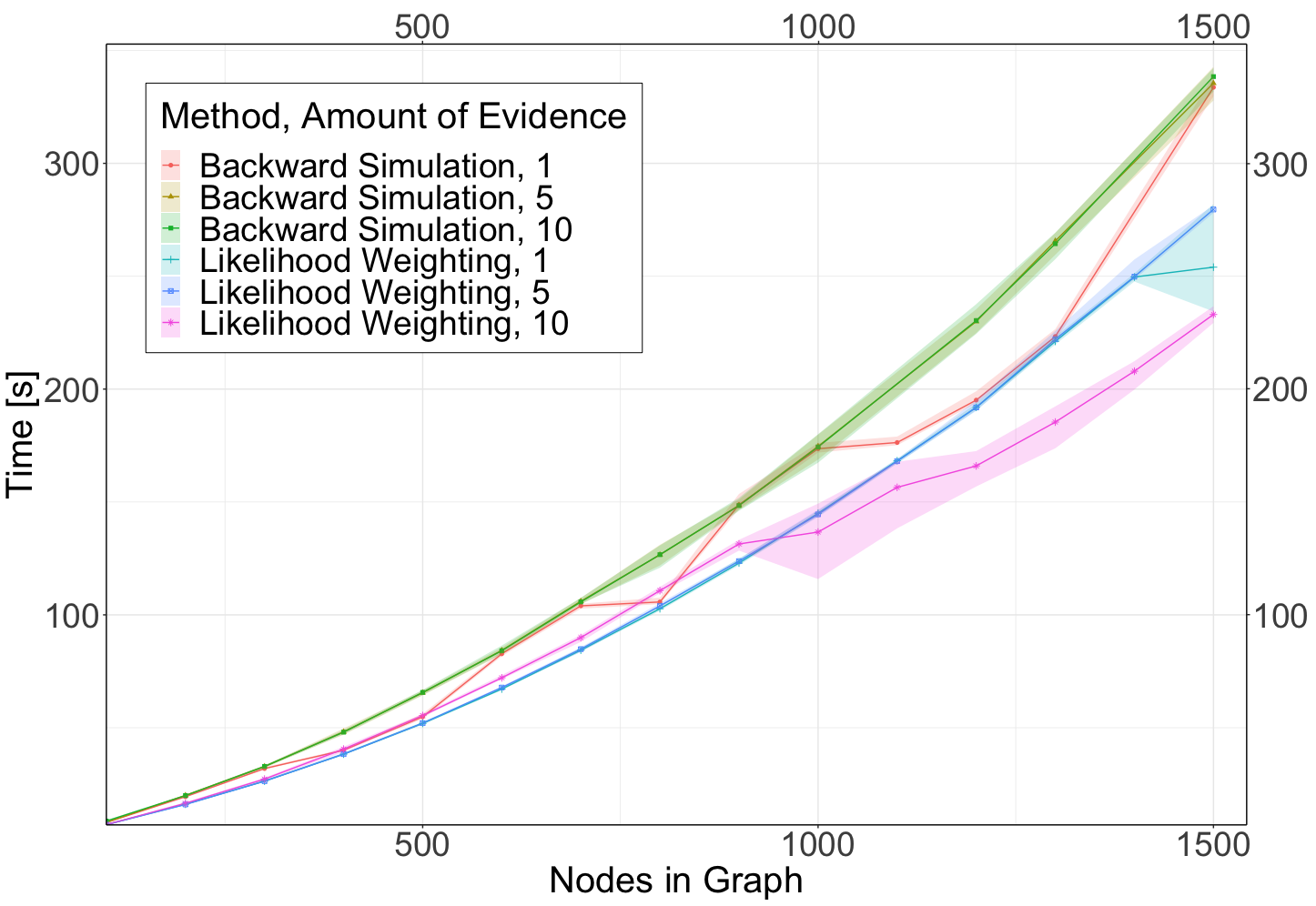}
  \caption{Time required for increasing graph sizes for LW and BS for different amounts of evidence.}
  \label{fig_size}
\end{figure}

Next the stopping criteria is relaxed to an error of $\pm0.04$ per node and the runs are repeated in a similar manner to Figure \ref{fig_size} but up to graphs of 5000 nodes to investigate the scalability of the techniques. These runs are plotted in Figure \ref{fig_size2} where it can be seen that LW out performs BS for all evidence levels at the larger graph sizes.
\begin{figure}[h!]
  \centering
  \includegraphics[width=0.75\linewidth]{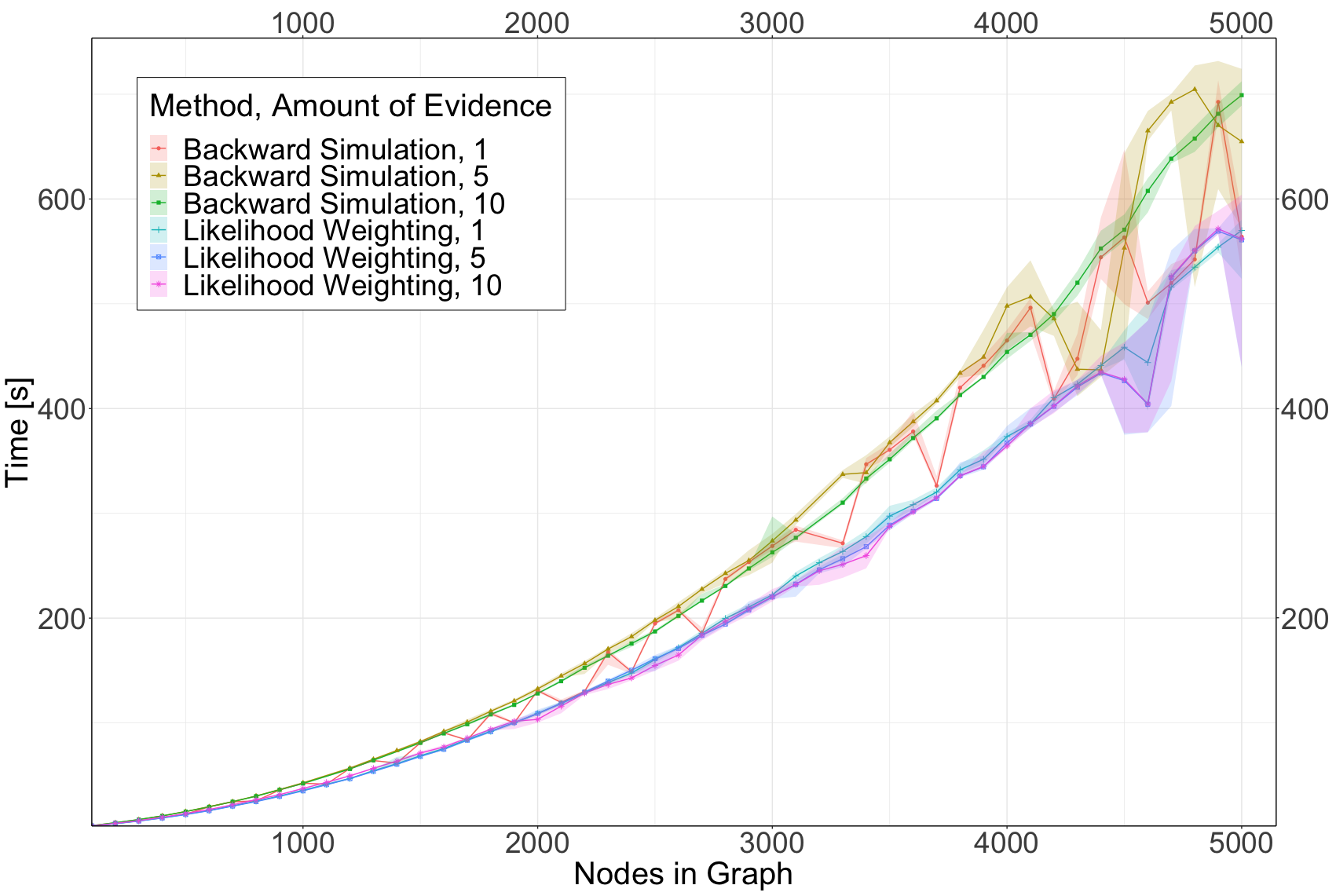}
  \caption{Time required for larger graph sizes for LW and BS for different amounts of evidence.}
  \label{fig_size2}
\end{figure}

Given these results, likelihood weighting is the best technique for belief updating in Bayesian attack graphs as it not only performs slightly better overall across different evidence levels and graph sizes but also is easier to implement than backward simulation. Probabilistic logic sampling should only be used if no evidence is expected most of the time.


\section{Sensitivity Analysis}
\label{sec:sens}
\begin{figure}[h!]
\includegraphics[width=\textwidth]{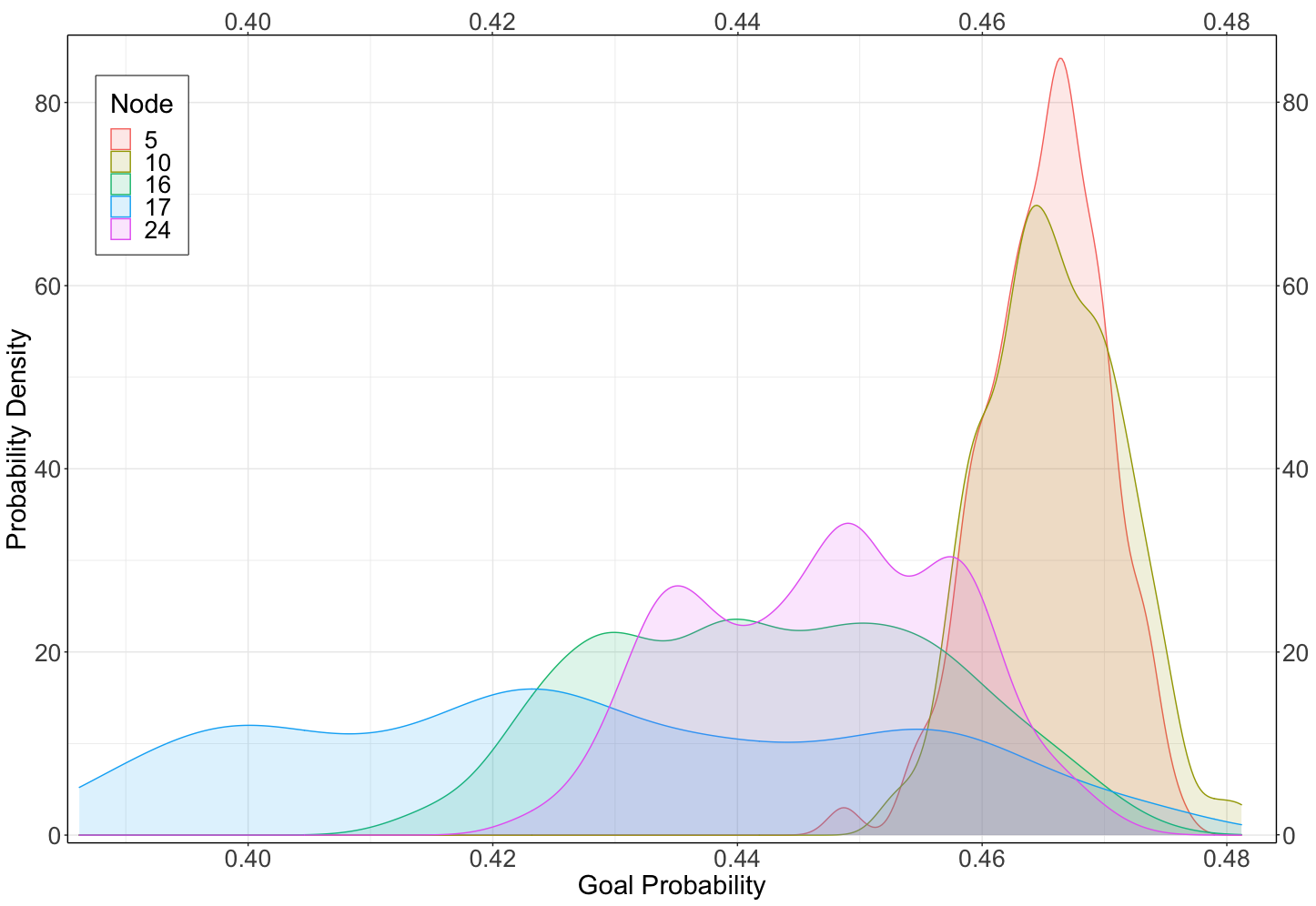}
\caption{Probability density of goal node when a uniform distribution is used for various leaf nodes, demonstrating their sensitivity.} \label{fig_sensdens}
\end{figure}

The prior probabilities for the vulnerabilities on the LEAF nodes can be generated via different methods of varying complexity.
For example Doynikova and Kotenko \cite{Doynikova2017} use various parts of the CVSS vector and Cheng et al. \cite{Cheng2017} model the relationships of parts of the metrics to give them different weights and improve the accuracy of the probabilities. All these techniques however draw from the data available for a vulnerability which is often incomplete and quickly becomes outdated. Sensitivity analysis is important for the overall analysis of BAGs as it considers the impact of the original assignment of the probability.


To evaluate the sensitivity of the graph to the LEAF nodes, each node can be assigned a uniform probability distribution in turn rather than a single probability. A distribution can be generated for one or several goal nodes in the network with respect to each LEAF node; this is done by sampling from the LEAF nodes' uniform distribution, then generating a sample of the entire network as before. The change in the probability density of the access probability of the goal node in the network from Figure \ref{fig:egbag} is shown in Figure \ref{fig_sensdens}. The wider the distribution the more sensitive the goal node is to the probability applied at the LEAF node, if the LEAF node probability does not affect the goal node at all the probability density would be entirely concentrated at the goal probability value that is calculated when there are only single values for all LEAF nodes.

As such, the network is more sensitive to changes at node 17 or 16, whereas nodes 5 and 10 do not have much of an effect on the goal probability as shown by their narrow probability densities. This type of sensitivity analysis has been performed by others, as discussed in Section \ref{sec:related}, however, in what follows we propose an alternative technique that requires much less computation and gives more usable results.

\begin{theorem}
Using the stochastic techniques for incorporating evidence in the graph, one can also discover the sensitivities of LEAF nodes on the graph with much less computational effort than the method of applying a uniform distribution at the LEAF nodes. Instead of generating simulations for a distribution of probabilities, a sensitivity value can be calculated by performing simulations when a LEAF node is given evidence as 'y' and as 'n'. As such only two simulations need to be generated.
\end{theorem}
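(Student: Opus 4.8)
The plan is to show that the access probability $P(X_k)$, viewed as a function of the single parameter $p(v)$ for a fixed LEAF node $v$, is \emph{affine}, and that its two coefficients are exactly the quantities produced by the evidence‑incorporating simulations of Section~\ref{sec:samplingtech} when $v$ is clamped to $1$ (`y') and to $0$ (`n'). First I would fix $v\in V_l$ and a target node $k\in\mathcal V$. Because $v$ is a LEAF node, the parameter $p(v)$ enters the Bayesian network $\mathfrak B$ only through the marginal table \eqref{eq:BN_leaf}, i.e.\ $\text{Prob}(X_v=1)=p(v)$; the conditional tables of every other node are untouched, so the joint factorises as $P(\mathbf X)=P(X_v)\prod_{j\neq v}P(X_j\mid pa(X_j))$ with the product terms free of $p(v)$. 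Conditioning on $X_v$ and using the law of total probability then gives
\begin{equation}
\label{eq:sens_affine}
P(X_k=1) = p(v)\,P(X_k=1\mid X_v=1) + \bigl(1-p(v)\bigr)\,P(X_k=1\mid X_v=0),
\end{equation}
and the two conditional probabilities on the right do \emph{not} depend on $p(v)$. Hence $P(X_k=1)=a+b\,p(v)$ with $b=P(X_k=1\mid X_v=1)-P(X_k=1\mid X_v=0)$ and $a=P(X_k=1\mid X_v=0)$, so the sensitivity $\partial P(X_k=1)/\partial p(v)=b$ is constant in $p(v)$.

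Next I would identify the right‑hand side of \eqref{eq:sens_affine} with the output of the simulation algorithms. Running PLS, LW or BS (Algorithms~\ref{alg:pls}--\ref{alg:lw}) with the single‑node evidence $\mathcal Z=\{v\}$, $\mathbf z=1$ returns an estimate of $P(X_k=1\mid X_v=1)$, and the same run with $\mathbf z=0$ returns $P(X_k=1\mid X_v=0)$; subtracting the two estimates yields $b$, and linear interpolation between them recovers the whole curve $p(v)\mapsto P(X_k=1)$. This is exactly two simulation passes, independent of how finely one would otherwise resolve the density in Figure~\ref{fig_sensdens}; moreover that density is supported precisely on the interval with endpoints $P(X_k=1\mid X_v=0)$ and $P(X_k=1\mid X_v=1)$, whose width $|b|$ is the sensitivity, so the two runs reproduce its informative content. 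The standard‑error expression of Section~\ref{sec:samplingtech} then carries over: each endpoint is estimated with error $\sigma$, so $b$ is estimated with error at most about $\sqrt{2}\,\sigma$, giving the confidence bound requested in Problem~\ref{prob:sense}. I would also remark, for completeness, that after the graph decomposition of Figures~\ref{fig:precon}--\ref{fig:postcon} every original local probability — including those on AND/OR nodes — resides on some LEAF node, so the two‑simulation recipe applies uniformly to the sensitivity with respect to each $p(v)$, $v\in\mathcal V$.

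Finally I would extend the argument to the conditional setting and flag where it weakens. For evidence $\mathcal Z$ with $v\notin\mathcal Z$, both $P(X_k=1,\mathcal Z=\mathbf z)$ and $P(\mathcal Z=\mathbf z)$ are affine in $p(v)$ by the same conditioning step, so $P(X_k=1\mid\mathcal Z=\mathbf z)$ is a ratio of affine functions; writing it as the convex combination
\begin{equation}
\label{eq:sens_cond}
P(X_k=1\mid\mathcal Z=\mathbf z) = \lambda\,P(X_k=1\mid X_v=1,\mathcal Z=\mathbf z) + (1-\lambda)\,P(X_k=1\mid X_v=0,\mathcal Z=\mathbf z),
\end{equation}
with $\lambda=P(X_v=1\mid\mathcal Z=\mathbf z)\in[0,1]$, shows it always lies between the two values obtained by additionally clamping $X_v$, and these again come from two simulation runs. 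The main obstacle I anticipate is precisely this last point: unlike the unconditional case, the conditional probability is not exactly linear in $p(v)$, since the weight $\lambda$ itself varies with $p(v)$, so the two runs deliver a tight enclosing interval rather than the exact functional dependence; one must argue that this interval is the appropriate notion of sensitivity here and, via monotonicity of the fractional‑linear map in \eqref{eq:sens_cond}, that the reported endpoints remain valid extremes. A secondary care point is that ``two simulations'' refers to two distinct parameter settings, each still a Monte‑Carlo batch of $N$ samples, so the efficiency claim is about the number of settings explored, not the per‑run sample budget — this is what must be contrasted quantitatively with the uniform‑prior method, which needs a fresh batch for every sampled value of $p(v)$.
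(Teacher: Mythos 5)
Your proposal is correct and follows essentially the same route as the paper: the law of total probability over the clamped LEAF node shows $P(X_k)$ is affine in $p(v)$, so the sensitivity is exactly $P(X_k\mid X_v=1)-P(X_k\mid X_v=0)$, obtained from two evidence-clamped simulation runs. Your additional material (justifying that the conditionals are free of $p(v)$, the error propagation, and the caveat that the conditional-evidence case is only fractional-linear) elaborates on, but does not depart from, the paper's argument.
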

\begin{proof}
The probability of the goal node, $P(A)$, can be calculated with respect to a LEAF node $B$
\begin{equation}
P(A) = P(A|B=0)P(B=0) + P(A|B=1)P(B=1) 
\end{equation}
which is equivalent to 
\begin{equation}
P(A) = P(A|B=0)(1-P(B=1)) + P(A|B=1)P(B=1).
\end{equation}
We then calculate sensitivity as
\begin{equation}
Sensitivity = P(A|B=1) - P(A|B=0) 
\end{equation}
\end{proof}

The sensitivities calculated in this manner are shown in Table \ref{tab_sens}, and using this sensitivity value allows quick evaluation of the importance of each node without the extensive computation or the required analysis of the probability distribution that is necessary to generate and interpret Figure \ref{fig_sensdens}. The information remains the same, however, with node 17 being the most important followed by 16 then 24, while nodes 5 and 10 have very little impact on the goal node probability.
\begin{table}[h!]
\renewcommand{\arraystretch}{1.3}
\caption{Sensitivities calculated using 'on/off' evidence.}
\label{tab_sens}
\centering
\begin{tabular}{c|c }
\hline
\bfseries Leaf Node & \bfseries Sensitivity \\
\hline\hline
17 & 0.7780 \\
\hline
16 & 0.4388 \\
\hline
24 & 0.3526 \\
\hline
5 & 0.0225 \\
\hline
10 & 0.0081 \\
\hline
\end{tabular}
\end{table}

\section{Related Work}
\label{sec:related}
One example of stochastic simulation techniques for attack graphs is by Noel and Jajodia \cite{Noel2010measuring}. They use PLS to compare different security fixes for a network. However this is performed by hand and as such it cannot be generally applied. Their use case compares several security controls that could be added to the network. This is achieved by examining the resulting distributions estimated when the changes are applied to the graph, in a manner similar to that shown in Figure~\ref{fig_sensdens} as a sensitivity analysis. As discussed in Section~\ref{sec:sens}, this requires more computation and also requires analysis of the resulting distributions. Baiardi and Sgandurra use Monte Carlo simulations in their Haruspex tool \cite{baiardi2013assessing}. This tool is a fully featured program that uses attack graphs and threat agents to model security. It is an application for this type of graph, incorporating many different elements, but does not analyse different methods for simulation.

Muñoz-González et al. present an exact method for inference in BAGs using the junction tree algorithm \cite{Munoz-Gonzalez2016}. This method is attractive due to its exact nature, but unfortunately is very limited in its application due to how it scales. This is caused by the requirement for tables to be generated based on the cliques created to start the calculations, and for large graphs these tables can become extremely large. It is better to have a trade-off in the accuracy of the method to reduce the space required, to allow scalability for the large graphs that are expected from enterprise networks. 
They go on to present an approximate technique in \cite{Munoz-Gonzalez2017} using loopy belief propagation. The results of this scale well, linearly with respect to the number of nodes, while achieving a reasonable level of accuracy. The drawback to using this method, unlike stochastic simulation, is that there is no guarantee of convergence to the correct value. 

\section{Conclusion}
\label{sec:conclusion}

In this paper we have presented and compared three techniques that can be generally applied to inference of any Bayesian attack graph. We make the recommendation that for most purposes the likelihood weighting process is a good choice to analyse an attack graph when any amount of evidence is presented, in a timely fashion. We also demonstrate a test of sensitivity for the graph that can be very quickly calculated and does not require any complex analysis of distributions or prior sampling of node distributions. This can be used both as remediation for the high uncertainty in LEAF node prior probabilities, as well as an easy prioritisation of vulnerabilities in light of their importance to a series of goal nodes.
%
%
%
\bibliographystyle{splncs04}
\bibliography{./references}

\clearpage
\appendix
\section{Full Example}
\label{app:eg}
\begin{figure*}[ht]
    \centering
	\includegraphics[width=\linewidth]{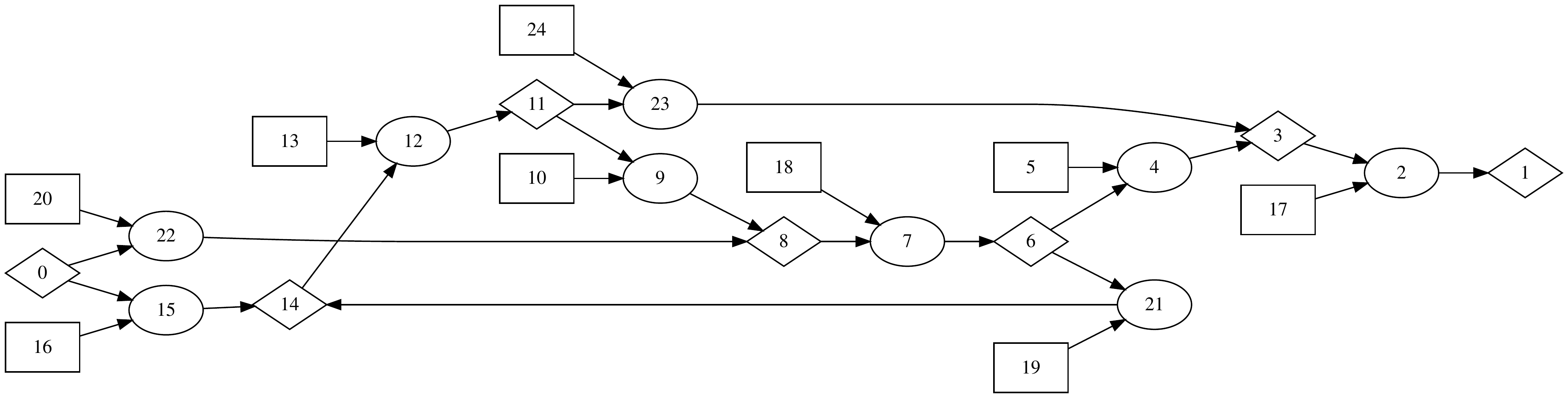}
	\caption{The BAG of the running example including leaf nodes.}
	\label{fullbagbw}
\end{figure*}
The complete attack graph for the running example scenario can be seen in Figure \ref{fullbagbw}, with the labels for the nodes shown in Listing \ref{listing}.
The vulnerabilities in this scenario are described below
\begin{itemize}
    \item CVE-2009-2446\footnote{https://nvd.nist.gov/vuln/detail/CVE-2009-2446} is present on the Database Server. This is a MySQL vulnerability that allows an authenticated user to cause a denial of service as well as execute arbitrary code
    \item CVE-2006-3747\footnote{https://nvd.nist.gov/vuln/detail/CVE-2006-3747} is on the Webserver. This is a vulnerability in the Apache web server that requires network access and is exploited using specially crafted URLs that then allow the attacker to execute arbitrary code
    \item CVE-2009-1918\footnote{https://nvd.nist.gov/vuln/detail/CVE-2009-1918} is a vulnerability on all the Workstations. It affects Internet Explorer, and means that if the user visits a website with malformed elements, a memory corruption is triggered that an attacker can use to execute code
\end{itemize}

\lstset{
  basicstyle=\ttfamily,
  columns=fullflexible,
  keepspaces=true,
}
\begin{lstlisting}[caption={MulVAL labels for Figure \ref{fullbagbw}}, label={listing}]
0, "attackerLocated(internet)"
1, "execCode(dbServer,root)"
2, "RULE 2 (remote exploit of a server 
program)"
3, "netAccess(dbServer,tcp,'3306')"
4, "RULE 5 (multi-hop access)"
5, "hacl(webServer,dbServer,
tcp,'3306')"
6, "execCode(webServer,apache)"
7, "RULE 2"
8, "netAccess(webServer,tcp,'80')"
9, "RULE 5"
10, "hacl(workStation,webServer,tcp,'80'"
11, "execCode(workStation,userAccount)"
12, "RULE 2"
13, "vulExists(workStation,'CVE-2009-1918',
IE,remoteExploit,privEscalation)"
14, "accessMaliciousInput(workStation, user, IE)"
15, "malicious website"
16, "visit of malicious website"
17, "vulExists(dbServer,'CVE-2009-2446',
mySQL,remoteExploit,privEscalation)"
18, "vulExists(webServer,'CVE-2006-3747',
apache,remoteExploit,privEscalation)"
19, "visit of compromised website"
20, "hacl(internet, webServer, tcp, '80')"
21, "compromise of website"
22, "RULE 6 (direct network access)"
23, "RULE 5"
24, "hacl(workStation,dbServer,tcp,'3306')
\end{lstlisting}
An \texttt{hacl} node is a 'host access control list' node that defines which hosts can connect to other hosts, as well as the protocol that is allowed to be used and the port that it is performed through.


\end{document}